\newcommand{\CW}{{\mathcal W}}
\newcommand{\mex}{{\rm mex}}
\newcommand{\nim}{{\rm nim}}
\newcommand{\mw}{{\rm mw}}
\newcommand{\nd}{{\rm nd}}
\newtheorem{theorem}{Theorem}
\newtheorem{lemma}[theorem]{Lemma}
\newtheorem{corollary}[theorem]{Corollary}
\newtheorem{definition}{Definition}
\algnewcommand{\IIf}[1]{\State\algorithmicif\ #1\ \algorithmicthen}
\algnewcommand{\EndIIf}{\unskip\ \algorithmicend\ \algorithmicif}
\begin{document}
\title{On Structural Parameterizations of Node Kayles}

\author{Yasuaki Kobayashi}
\date{}
\maketitle 
\begin{abstract}
Node Kayles is a well-known two-player impartial game on graphs: Given an undirected graph, each player alternately chooses a vertex not adjacent to previously chosen vertices, and a player who cannot choose a new vertex loses the game.
  The problem of deciding if the first player has a winning strategy in this game is known to be PSPACE-complete.
  There are a few studies on algorithmic aspects of this problem.
  In this paper, we consider the problem from the viewpoint of fixed-parameter tractability. We show that the problem is fixed-parameter tractable parameterized by the size of a minimum vertex cover or the modular-width of a given graph.
  Moreover, we give a polynomial kernelization with respect to neighborhood diversity.
\end{abstract}
%
%
%
\section{Introduction}\label{sec:intro}
{\em Kayles} is a two-player game with bowling pins and a ball.
In this game, two players alternately roll a ball down towards a row of pins.
Each player knocks down either a pin or two adjacent pins in their turn.
The player who knocks down the last pin wins the game.
This game has been studied in combinatorial game theory and the winning player can be characterized in the number of pins at the start of the game.

Schaefer \cite{Sch78} introduced a variant of this game on graphs, which is known as {\em Node Kayles}.
In this game, given an undirected graph, two players alternately choose a vertex, and the chosen vertex and its neighborhood are removed from the graph. The game proceeds as long as the graph has at least one vertex and ends when no vertex is left. The last player wins the game as well as the original game.
He studied the computational complexity of this game.
In this context, the goal is to decide whether the first player can win the game even though the opponent optimally plays the game, that is, the first player has a winning strategy.
He showed that this problem (hereinafter referred to simply as {\sc Node Kayles}) is PSPACE-complete.

After this hardness result was shown, there are a few studies on algorithmic aspects of {\sc Node Kayles}.
Bodlaender and Kratsch \cite{BK02} proved that {\sc Node Kayles} can be solved in $O(n^{k + 2})$ time, where $n$ is the number of vertices and $k$ is the asteroidal number of the input graph. This implies several tractability results for some graph classes, including AT-free graphs, circular-arc graphs, cocomparability graphs, and cographs since every graph in these classes has a constant asteroidal number.
Fleischer and Trippen~\cite{FT04} gave a polynomial-time algorithm for star graphs with arbitrary hair length.  
Bodlaender et al. \cite{BKT15} studied {\sc Node Kayles} from the perspective of exact exponential-time algorithms and gave an algorithm that runs in time $O(1.6031^n)$.

In this paper, we analyze {\sc Node Kayles} from the viewpoint of parameterized complexity.
Here, we consider a {\em parameterized problem} with instance size $n$ and parameter $k$.
If there is an algorithm that runs in $f(k)n^{O(1)}$ time, where the function $f$ is computable and does not depend on the instance size $n$, the problem is said to be {\em fixed-parameter tractable}.
There are several possible parameterizations on {\sc Node Kayles}. 
One of the most natural parameterizations is that the number of turns: The problem asks if the first player can win the game within $k$ turns.
This problem is known as {\sc Short Node Kayles} and, however, known to be AW[$*$]-complete \cite{ADF95}, which means that it is unlikely to be fixed-parameter tractable.

For tractable parameterizations, we leverage {\em structural parameterizations}, meaning that we use the parameters measuring the complexity of graphs rather than that of the problem itself.
{\em Treewidth} is one of the most prominent structural parameterizations for hard graph problems.
This parameter measures the ``tree-likeness'' of graphs. In particular, a connected graph has treewidth at most one if and only if it is a tree.
Although we know that many graph problems are fixed-parameter tractable when parameterized by the treewidth of the input graph \cite{Bod93}, the computational complexity of {\sc Node Kayles} is still open even when the input graph is restricted to trees. 

In this paper, we consider three structural parameterizations.
We show that {\sc Node Kayles} is fixed-parameter tractable parameterized by {\em vertex cover number} or by {\em modular-width}.
More specifically, we show that {\sc Node Kayles} can be solved in $3^{\tau(G)}n^{O(1)}$ time or $1.6031^{\mw(G)}n^{O(1)}$ time, where $\tau(G)$ is the size of a minimum vertex cover of $G$ and $\mw(G)$ is the modular-width of $G$.
Moreover, we show that {\sc Node Kayles} admits a $2\cdot\nd(G)$-vertex kernel, where $\nd(G)$ is the neighborhood diversity of $G$.
To the best of author's knowledge, these are the first non-trivial results of the fixed-parameter tractability of {\sc Node Kayles}.

The algorithm we used in this paper is, in fact, identical to that of Bodlaender et al.~\cite{BKT15}.
They gave a simple dynamic programming algorithm to solve {\sc Node Kayles} with the aid of the famous notion {\em nimber} in combinatorial game theory. They showed that this dynamic programming runs in time proportional to the number of specific combinatorial objects, which they call {\em K-sets}, and gave an exponential upper bound on the number of K-sets in a graph $G$.
In this paper, we prove that the number of K-sets of $G$ is upper bounded by some functions in $\tau(G)$ or $\mw(G)$, which directly yields our claims.
We also note that our combinatorial analysis is highly stimulated by the work of \cite{FLMT18} for the number of minimal separators and potential maximal cliques of graphs.

\section{Preliminaries}\label{sec:preli}

All graphs appearing in this paper are simple and undirected.
Let $G = (V, E)$ be a graph.
We denote by $N_G(v)$ the open neighborhood of $v \in V$ in $G$, that is, $N_G(v) = \{w \in V : \{v, w\} \in E\}$,
and by $N_G[v]$ the closed neighborhood of $v$ in $G$, that is, $N_G[v] = N_G(v) \cup \{v\}$.
Let $X \subseteq V$. We use $G[X]$ to denote the subgraph of $G$ induced by $X$.
We also use the following notations: $N_G(X) = \bigcup_{v \in X} N_G(v) \setminus X$ and $N_G[X] = \bigcup_{v \in X} N_G[v]$.
For disjoint subsets $X, Y \subseteq V$, we denote by $E(X, Y)$ the set of edges having one end point in $X$ and the other end point in $Y$.

\subsection{Sprague-Grundy Theory}\label{ssec:CGT}
The Sprague-Grundy theory provides unified tools to analyze many two-players impartial combinatorial games.
The central idea in this theory is to use {\em nimber}, which is a non-negative integer assigned to each position (or state) of a game.
The nimber of a position can be defined as follows.
Hereafter, we consider {\em normal play games}, that is, the player who makes the last move wins the game. 
If there is no move from the current position $p$, the nimber $\nim(p)$ of $p$ is defined to be zero.
Otherwise, the nimber of $p$ is inductively defined as: $\nim(p) = \mex(\{\nim(p_i) : 1 \le i \le m\})$, where $p_1, p_2, \ldots, p_m$ are the positions that can be reached from $p$ with exactly one move and $\mex(S)$ is the minimum non-negative integer not contained in $S$. 
We say that a position is called {\em a winning position} if the current player has a winning strategy from this position.
The following theorem characterizes winning positions of a game with respect to nimbers.
\begin{theorem}[\cite{Con76}]\label{thm:nimber}
  A position $p$ is a winning position if and only if $\nim(p) > 0$.
\end{theorem}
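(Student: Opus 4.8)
The plan is to prove the equivalent dichotomy by well-founded induction on the game positions, establishing simultaneously that $p$ is a winning position whenever $\nim(p) > 0$ and that $p$ is a losing position (the current player has no winning strategy) whenever $\nim(p) = 0$. Since every play of the normal-play games considered here terminates, the one-move relation induces a well-founded partial order on positions, so induction along this order is legitimate. I would take as the base case the terminal positions $p$, for which there is no move: by definition $\nim(p) = 0$, and the player to move cannot move and hence loses, so $p$ is indeed a losing position, in agreement with the claim.

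The key observation I would isolate first is a reformulation of the $\mex$ condition in terms of successors. Writing $p_1, \ldots, p_m$ for the positions reachable from $p$ in one move, I note that $\nim(p) > 0$ holds if and only if $0 \in \{\nim(p_i) : 1 \le i \le m\}$, that is, if and only if some successor has nimber $0$; dually, $\nim(p) = 0$ holds exactly when $p$ is terminal or every successor $p_i$ satisfies $\nim(p_i) > 0$. This is immediate from the defining property $\mex(S) = 0 \iff 0 \notin S$, and it is precisely the bridge between the arithmetic of nimbers and the recursive game-theoretic characterization of winning and losing positions.

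With this in hand, the inductive step splits into two cases. If $\nim(p) > 0$, then by the observation some successor $p_i$ has $\nim(p_i) = 0$; by the induction hypothesis $p_i$ is a losing position, so the current player can move to $p_i$ and hand the opponent a losing position, making $p$ a winning position. Conversely, if $\nim(p) = 0$ and $p$ is not terminal, then every successor $p_i$ has $\nim(p_i) > 0$, so by the induction hypothesis every successor is a winning position for the player who then moves, namely the opponent; hence whatever move the current player makes leaves the opponent in a winning position, and $p$ is a losing position (the terminal subcase is the base case already treated). Combining the two directions over all positions yields that $p$ is a winning position if and only if $\nim(p) > 0$.

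I do not expect a genuine obstacle here: the content lies entirely in setting up the well-founded induction correctly and in the clean $\mex$-equals-zero reformulation. The only points demanding care are ensuring that the induction is grounded, which relies on the finiteness and termination of the games under consideration so that the successor relation is well-founded, and keeping straight whose turn it is when passing between a position and its successors.
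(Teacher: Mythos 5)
Your proof is correct and is the standard well-founded induction argument for this classical fact of Sprague--Grundy theory; the key reformulation that $\nim(p)>0$ iff some successor has nimber $0$ (since $\mex(S)=0$ iff $0\notin S$) is exactly the right bridge. The paper itself gives no proof, citing the result from Conway, so there is nothing to compare against beyond noting that your argument is the one found in the standard references.
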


The Sprague-Grundy theory allows us not only a simple way to decide the winning player of a game but also an efficient way to compute those nimbers when a position of the game can be decomposed into two or more ``independent'' subpositions.
Consider two positions $p_1$ and $p_2$ of (possibly different) games.
Then, we can make a new position of the combined game in which each player chooses one of the two positions $p_1$ and $p_2$ and then moves the chosen position to a next position in each turn. When both games are over, so is the combined game. We denote by $p_1 + p_2$ the combined position of $p_1$ and $p_2$.

\begin{theorem}[\cite{Con76}]\label{thm:nimsum}
  For any two positions $p_1$ and $p_2$ of (possibly distinct) games, we have $\nim(p_1 + p_2) = \nim(p_1) \oplus \nim(p_2)$,
  where $\oplus$ is the bit-wise exclusive or of the binary representations of given numbers.
\end{theorem}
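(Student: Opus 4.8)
The plan is to prove the nim-sum identity by induction on a well-founded measure of the combined position --- for instance the total number of moves available before $p_1 + p_2$ terminates --- since any single move strictly decreases this measure. The base case, in which neither component admits a move, is immediate: both nimbers are then $0$ and $0 \oplus 0 = 0$. For the inductive step I would write $a = \nim(p_1)$, $b = \nim(p_2)$, and $c = a \oplus b$, and observe that a single move from $p_1 + p_2$ advances exactly one component, yielding either $p_1' + p_2$ with $p_1'$ a one-move successor of $p_1$, or $p_1 + p_2'$ with $p_2'$ a one-move successor of $p_2$. The induction hypothesis then assigns these successors the nimbers $\nim(p_1') \oplus b$ and $a \oplus \nim(p_2')$. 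Since $\nim(p_1 + p_2)$ is by definition the $\mex$ of the successor nimbers, the whole argument reduces to two claims: that no successor attains the value $c$, and that every value $d$ with $0 \le d < c$ is attained.

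The first claim is the easy direction. If some successor, say $p_1' + p_2$, had nimber $c$, then $\nim(p_1') \oplus b = a \oplus b$ would force $\nim(p_1') = a$; but $a = \nim(p_1)$ is by definition the $\mex$ of the successor nimbers of $p_1$, so no successor of $p_1$ can itself carry the value $a$, a contradiction (and symmetrically for moves in the second component).

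The second claim is where the real work lies, and I expect the bit manipulation here to be the main obstacle. Given $d < c$, I would set $e = c \oplus d$, let $k$ be the position of the leading set bit of $e$, and note that since $d < c$ this bit is set in $c$ but not in $d$. As $c = a \oplus b$, bit $k$ is set in exactly one of $a, b$; assuming by symmetry it is set in $a$, XORing $a$ with $e$ clears this leading bit without disturbing any higher bit, so $a \oplus e < a$. The $\mex$ property of $\nim(p_1) = a$ then guarantees a one-move successor $p_1'$ of $p_1$ with $\nim(p_1') = a \oplus e$, and the induction hypothesis gives $\nim(p_1' + p_2) = (a \oplus e) \oplus b = c \oplus e = d$, as desired.

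Once both claims are in hand, the $\mex$ of the successor nimbers is exactly $c$, giving $\nim(p_1 + p_2) = \nim(p_1) \oplus \nim(p_2)$ and closing the induction. Apart from the choice of component and the inequality $a \oplus e < a$ in the second claim --- the one genuinely non-mechanical point --- everything else follows directly from the definitions of $\mex$ and $\oplus$ together with the inductive hypothesis.
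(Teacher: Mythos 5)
Your proof is correct: it is the standard Sprague--Grundy argument for sums of games, with both halves of the $\mex$ computation (no successor attains $c = a \oplus b$; every $d < c$ is attained by cancelling the leading bit of $e = c \oplus d$ inside whichever of $a$, $b$ carries that bit) handled properly. The paper itself offers no proof to compare against --- it states this as a known theorem cited to Conway --- so there is nothing to contrast beyond noting that your well-founded induction (on remaining moves, or equivalently on the game tree) is exactly the textbook route and is valid here since all games in question are finite.
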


\subsection{K-set}\label{ssec:K-set}
Bodlaender et al. \cite{BKT15} introduced the notion of {\em K-set} to characterize each game position of Node Kayles.

\begin{definition}
  Let $G = (V, E)$.
  A {\em K-set} is a non-empty subset $W \subseteq V$ such that 
  \begin{itemize}
    \item $G[W]$ is connected and
    \item there is an independent set $X \subseteq V$ with $W = V \setminus N_G[X]$.
  \end{itemize}
  We call such a triple $(W, N_G(X), X)$ a {\em K-set triple} of $G$.
\end{definition}

Let us note that a K-set triple partitions $V$ into three sets. Moreover, there are no edges between $W$ and $X$ as $N_G(X)$ separates $X$ from $W$.

They analyzed Node Kayles via the Sprague-Grundy theorem.
In Node Kayles, each position corresponds to some induced graph of $G$ and a single move corresponds to choosing a vertex and then deleting it together with its neighbors.
If $G$ has two or more connected components $G_1, G_2, \ldots, G_k$, by Theorem~\ref{thm:nimsum},
the nimber $\nim(G)$ can be computed as $\nim(G_1) \oplus \nim(G_2) \oplus \cdots \oplus \nim(G_k)$.
This means that we can independently compute the nimber of each connected component.
An important consequence from this fact is that each position, of which we essentially need to compute the nimber, is a subgraph induced by some K-set with chosen vertices $X$.
Thus, a standard recursive dynamic programming algorithm over all K-sets, described in Algorithm~\ref{alg:dp}, solves {\sc Node Kayles}.
\begin{algorithm}
  \caption{A recursive algorithm for computing $\nim(G)$.}\label{alg:dp}
\begin{algorithmic}
  \Procedure{$\nim$}{$G = (V, E)$}
  \IIf{$G$ is empty} \Return 0 \EndIIf
  \If{$G$ has two or more connected components $G_1, G_2, \ldots, G_k$}
    \State \Return $\nim(G_1) \oplus \nim(G_2) \oplus \cdots \oplus \nim(G_k)$
  \EndIf
  \If{$\nim(G)$ has been already computed}
    \State \Return $\nim(G)$
  \EndIf
  \State \Return $\mex(\{\nim(G[V \setminus N_G[v]) : v \in V\})$
  \EndProcedure
\end{algorithmic}
\end{algorithm}

We denote by $\kappa(G)$ the number of K-sets of $G$.
\begin{lemma}[\cite{BKT15}]\label{lem:K-set}
  Let $G$ be an input graph with $n$ vertices. Then, {\sc Node Kayles} can be solved in $\kappa(G)n^{O(1)}$ time.
\end{lemma}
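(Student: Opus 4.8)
The plan is to show that Algorithm~\ref{alg:dp}, equipped with memoization keyed on the induced subgraph currently being processed, both computes $\nim(G)$ correctly and performs the costly $\mex$-computation on at most $\kappa(G)$ distinct subproblems. Correctness is essentially inherited from the setup: a move in Node Kayles deletes a closed neighborhood $N_G[v]$, the empty graph is the unique terminal position with nimber $0$, and Theorems~\ref{thm:nimber} and~\ref{thm:nimsum} justify the base case, the decomposition of a disconnected position into the exclusive or of the nimbers of its connected components, and the $\mex$ recurrence on a connected position. So the real content is a counting statement: only connected induced subgraphs are memoized at the last line of the procedure, the map $G[W]\mapsto W$ from such subproblems to vertex subsets is injective, and I claim every $W$ arising this way is a K-set of $G$. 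Since there are exactly $\kappa(G)$ K-sets, this caps the number of distinct ``expensive'' subproblems at $\kappa(G)$.

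The step I would isolate as a lemma is this: \emph{if $X\subseteq V$ is independent and $U = V\setminus N_G[X]$, then every connected component of $G[U]$ is a K-set of $G$.} To prove it, fix a component $C$ of $G[U]$ and let $S$ be a \emph{maximal} independent set of $G[U\setminus C]$, and set $X' = X\cup S$. First, $X'$ is independent: $X$ and $S$ are each independent, and there is no edge between them because every $s\in S\subseteq U = V\setminus N_G[X]$ lies outside $N_G[X]$. Next I would verify $V\setminus N_G[X'] = C$. Since $S$ is a maximal independent set of $G[U\setminus C]$ it dominates $U\setminus C$, so $U\setminus C\subseteq N_G[S]$; conversely $N_G[S]\cap C=\emptyset$ because $S\subseteq U\setminus C$ has no neighbor in the distinct component $C$ and no vertex of $C$ lies in $S$. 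Hence $N_G[X'] = N_G[X]\cup N_G[S]$ meets $U$ in exactly $U\setminus C$, giving $V\setminus N_G[X'] = U\setminus N_G[S] = C$. As $C$ is nonempty and $G[C]$ is connected, $C$ is a K-set with witness $(C, N_G(X'), X')$.

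With the lemma in hand, a straightforward induction over the recursion finishes the counting. The root call corresponds to $X=\emptyset$ and $U=V$, whose components are K-sets by the lemma; and whenever the procedure reaches a connected K-set $C = V\setminus N_G[X']$ and branches on a vertex $v\in C$, the resulting child is $G[C\setminus N_G[v]] = G[V\setminus N_G[X'\cup\{v\}]]$ with $X'\cup\{v\}$ independent (as $v\in C$ avoids $N_G[X']$), so the lemma again shows each of its components is a K-set. Thus every connected subproblem is a K-set, the memoization table holds at most $\kappa(G)$ entries, and each is processed once: iterating over the $\le n$ choices of $v$, decomposing each child into connected components, and combining nimbers by $\mex$ and $\oplus$ all take $n^{O(1)}$ time. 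Summing over the subproblems yields the $\kappa(G)\,n^{O(1)}$ bound. I expect the main obstacle to be the lemma, specifically the argument that a single component $C$ can be realized \emph{exactly} as $V\setminus N_G[X']$; the maximal-independent-set-is-dominating trick on $G[U\setminus C]$, together with the absence of edges across components, is precisely what makes this equality hold.
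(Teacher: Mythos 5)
Your proposal is correct and follows the same route the paper takes: the paper states this lemma as a citation to Bodlaender et al.~\cite{BKT15} and only sketches the argument via Algorithm~\ref{alg:dp} (memoized DP whose connected subproblems are exactly induced subgraphs of K-sets), which is precisely what you formalize. Your auxiliary lemma---that each connected component $C$ of $V\setminus N_G[X]$ is realized exactly as $V\setminus N_G[X']$ by augmenting $X$ with a maximal independent set of $G[(V\setminus N_G[X])\setminus C]$---is a correct and clean way to fill in the one detail the paper leaves implicit.
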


Bodlaender et al. also gave the following upper bound on the number of K-sets.

\begin{lemma}[\cite{BKT15}]\label{lem:general-ub}
  Let $G$ be a graph with $n$ vertices.
  Then, $\kappa(G) = O((1.6031-\varepsilon)^n)$ for some small constant $\varepsilon > 0$.
\end{lemma}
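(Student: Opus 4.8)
The plan is to count K-sets by enumerating them through a branching (recursive) procedure and to bound the number of leaves of the recursion tree. The starting observation is that a K-set $W$ and the rest of its triple are entirely governed by the witnessing independent set $X$: once $X$ is fixed we have $N_G(X)$ and $W = V \setminus N_G[X]$. Hence it suffices to bound the size of the family $\{V \setminus N_G[X] : X \text{ independent}, \ V \setminus N_G[X] \text{ connected and nonempty}\}$, counting each distinct set $W$ once. I would therefore design the branching so that it assigns to every vertex one of the three roles $W$, $N_G(X)$, $X$ consistently with the constraints defining a K-set triple --- $X$ independent, every vertex of $N_G(X)$ adjacent to $X$, $G[W]$ connected, and no edge between $W$ and $X$ --- and so that each set $W$ is produced exactly once, e.g.\ by fixing a canonical witness $X$ for each $W$ and only ever extending it.

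The engine of the bound is the role ``$v \in X$''. If a vertex $v$ is placed in $X$, then independence forbids any neighbor of $v$ from lying in $X$ and the no-edge condition forbids any neighbor of $v$ from lying in $W$, so every neighbor of $v$ must lie in $N_G(X)$; thus the roles of all $\deg(v) + 1$ vertices of $N_G[v]$ become fixed at once and we may recurse on $G - N_G[v]$. A vertex $v$ placed in $N_G(X)$ is also cheap to branch on, because it must have a neighbor in $X$: branching over the choice of that neighbor triggers the strong $X$-reduction above. Choosing $v$ to be, say, a vertex of maximum degree, these two roles give substantial decreases in the number of undecided vertices and contribute well-behaved terms to the branching recurrence.

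The main obstacle is the role ``$v \in W$''. Committing $v$ to $W$ removes only $v$, and worse, $v$ may be a cut vertex of $G[W]$, so the convenient reduction ``restrict to $G - v$'' fails: $W \setminus \{v\}$ need not be connected and need not be a K-set of $G - v$. The connectivity requirement on $W$ is a global constraint and is precisely what keeps a naive $2^n$ count from being tight. I would handle it by growing $W$ from a root vertex and branching on the frontier between $W$ and $N_G(X)$, so that connectivity is maintained structurally rather than rechecked globally, and by assigning reduced weights (a measure-and-conquer measure) to vertices already constrained to $N_G(X)$. Solving the resulting weighted branching recurrence, with a case analysis on the degree of the branching vertex, is where the precise base originates; balancing the branches is intended to yield the stated bound $O((1.6031 - \varepsilon)^n)$, with the tight instances being near-regular graphs of small degree. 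I expect certifying optimality of the weights --- and hence that low-degree configurations cannot push the count above $1.6031^n$ --- to be the most technical part.
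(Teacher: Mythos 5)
This lemma is not proved in the paper at all: it is imported verbatim from Bodlaender, Kratsch and Timmer~\cite{BKT15}, so there is no in-paper argument to compare yours against. Judged on its own terms, your proposal is a plausible outline of the kind of branching/measure-and-conquer analysis that such a bound requires, and your structural observations are sound: $X$ determines the whole triple, placing $v$ in $X$ forces all of $N_G(v)$ into $N_G(X)$, and a vertex placed in $N_G(X)$ can be charged to a neighbor that must go into $X$. You also correctly identify the two real difficulties, namely that the role ``$v\in W$'' removes only one vertex and that the connectivity of $G[W]$ is a global constraint that blocks the naive reduction to $G-v$.

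However, the proposal stops exactly where the proof would have to begin, so there is a genuine gap. Everything that makes the statement quantitative is deferred: you never specify the measure (which weights go to undecided vertices versus vertices already committed to $N_G(X)$ or to the frontier of $W$), never write down the branching recurrences, and never carry out the degree case analysis whose characteristic roots would produce the constant $1.6031$. As written, the argument is equally consistent with a bound of $1.8^n$ or $2^n$; the sentences ``balancing the branches is intended to yield the stated bound'' and ``I expect certifying optimality of the weights \ldots to be the most technical part'' concede that the decisive step is missing. The connectivity issue is likewise only named, not solved: ``growing $W$ from a root and branching on the frontier'' needs a concrete invariant guaranteeing that each K-set is generated at least once and, for the counting to be tight, an injection or canonical-witness rule ensuring it is generated at most once --- neither is supplied. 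To complete the proof you would need to either reconstruct the full measure-and-conquer analysis of~\cite{BKT15} or simply cite that paper, as the author does here.
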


These lemmas immediately give an $O(1.6031^n)$-time algorithm for {\sc Node Kayles}. This raises a natural question: How large is the number of K-sets in general graphs? They also gave the following lower bound example. 
\begin{lemma}[\cite{BKT15}]\label{lem:general-lb}
  There is a graph $G$ of $n$ vertices satisfying
  $\kappa(G) = 3^{(n - 1) / 3} + 4(n - 1) / 3$.
\end{lemma}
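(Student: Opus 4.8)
The plan is to realize the bound with a \emph{spider}. Let $G$ consist of a center $c$ together with $t=(n-1)/3$ legs, where the $i$-th leg is a path on three fresh vertices $a_i,b_i,d_i$ with edges $\{c,a_i\},\{a_i,b_i\},\{b_i,d_i\}$, so that $c$ is joined to the graph only through the $a_i$'s. Then $|V|=3t+1=n$. Every K-set $W=V\setminus N_G[X]$ must induce a connected subgraph, and since deleting $c$ leaves the legs pairwise nonadjacent, I would partition the K-sets according to whether $c\in W$ or $c\notin W$, show the two families are disjoint, and count each separately.

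For the ``global'' family ($c\in W$): requiring $c\notin N_G[X]$ forces $c\notin X$ and $a_i\notin X$ for every $i$, so the independent set restricts to $X\cap\{b_i,d_i\}$ on each leg. A direct computation of $N_G[X]$ shows that the trace $W\cap\{a_i,b_i,d_i\}$ on leg $i$ can only be one of $\{a_i,b_i,d_i\}$, $\{a_i\}$, or $\emptyset$, realized by $X\cap\{b_i,d_i\}\in\{\emptyset,\{d_i\},\{b_i\}\}$ respectively, and that these choices are independent across legs because the legs interact only through $c$. Each nonempty trace contains $a_i$ and hence attaches to $c$, so every one of the $3^{t}$ combinations yields a connected $W\ni c$, and distinct combinations give distinct sets; this family contributes exactly $3^{t}$ K-sets.

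For the ``local'' family ($c\notin W$): since removing $c$ disconnects the legs, a connected $W$ avoiding $c$ must lie inside a single leg $i$. I would enumerate the six connected subsets $S$ of the path $a_ib_id_i$ and decide which of them arise as $W\cap\{a_i,b_i,d_i\}$ for some independent $X$ that simultaneously deletes $c$ and every other leg. The delicate step is to pin down realizability exactly: one checks that precisely four subsets occur, namely $\{a_i\}$, $\{d_i\}$, $\{b_i,d_i\}$, and the whole leg $\{a_i,b_i,d_i\}$, while $\{b_i\}$ and $\{a_i,b_i\}$ are impossible, because the only neighbor of $d_i$ is $b_i$, so any $X$ deleting $d_i$ necessarily forces $b_i\in N_G[X]$ as well. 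For each admissible $S$ I would exhibit one globally independent $X$ producing trace $S$ on leg $i$ (e.g.\ removing $c$ by placing some $a_j$, $j\neq i$, into $X$, and removing each remaining leg entirely), and verify that these local requirements are mutually compatible so that $X$ stays independent. This yields $4$ local K-sets per leg, hence $4t$ in total.

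Summing the two disjoint families gives $\kappa(G)=3^{t}+4t=3^{(n-1)/3}+4(n-1)/3$. I expect the local case to be the main obstacle: obtaining the exact count hinges both on the non-realizability argument for $\{b_i\}$ and $\{a_i,b_i\}$ and on checking that the remaining four traces can each be produced by a single independent set that also clears $c$ and all other legs, whereas the global case reduces to a clean per-leg enumeration.
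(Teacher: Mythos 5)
Your construction is exactly the paper's lower-bound example (a spider with $(n-1)/3$ legs of length three attached to a root), and your counting — $3^{t}$ K-sets containing the center, obtained from the three realizable per-leg traces, plus the four per-leg K-sets avoiding the center — matches the paper's argument precisely. The proposal is correct and takes the same approach; you simply supply the realizability checks that the paper leaves implicit.
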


Note that $3^{n/3} = \omega(1.4422^n)$ and hence there is still a gap between upper and lower bounds on the number of K-sets.
Figure~\ref{fig:lb} illustrates a lower bound example in Lemma~\ref{lem:general-lb}.
There are four K-sets $\{v^i_1\}, \{v^i_3\}, \{v^i_2, v^i_3\}, \{v^i_1, v^i_2, v^i_3\}$ not containing the root $r$ for each $1 \le i \le (n - 1) / 3$. Let $P_i = \{v^i_1, v^i_2, v^i_3\}$.
For any K-set $W$ containing $r$, there are three possibilities: $W \cap P_i = \emptyset$, $W \cap P_i = \{v^i_1\}$, or $W \cap P_i = P_i$ for each $1 \le i \le (n - 1) / 3$.
Therefore, there are exactly $3^{(n - 1)/3}$ K-sets containing $r$.

\begin{figure}
  \centering
  \includegraphics[width=4.5cm]{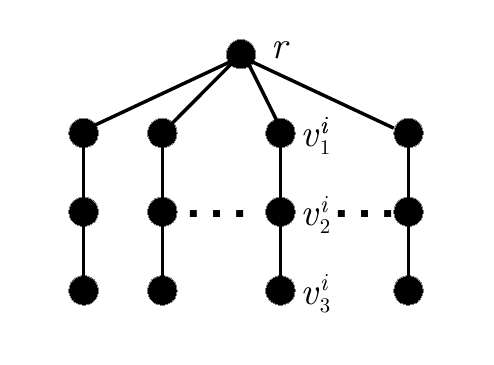}
  \caption{The lower bound example of Lemma~\ref{lem:general-lb}.}\label{fig:lb}
\end{figure}

\section{Vertex Cover Number and K-Sets}\label{sec:vc}
In this section, we give an upper bound on the number of K-sets with respect to the minimum size of a vertex cover of the input graph $G$.
Let $\tau(G)$ be the size of a minimum vertex cover of $G$.

\begin{theorem}\label{thm:vc-ub}
  $\kappa(G) \le 3^{\tau(G)} + |V| - \tau(G) - 2^{\tau(G)}$.
\end{theorem}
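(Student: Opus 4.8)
The plan is to bound the number of K-sets by analyzing how each K-set interacts with a fixed minimum vertex cover $C$ of size $\tau(G)$, together with the independent set $I = V \setminus C$. The key observation is that a K-set triple $(W, N_G(X), X)$ partitions $V$ into three parts, and the crucial data determining the K-set is how the vertex cover $C$ is split among $W$, $N_G(X)$, and $X$. Since $|C| = \tau(G)$, there are at most $3^{\tau(G)}$ ways to assign each vertex of $C$ to one of the three classes. **First I would** argue that once the trace of a K-set triple on $C$ is fixed — that is, once we know $C \cap W$, $C \cap N_G(X)$, and $C \cap X$ — the entire K-set $W$ is (almost) determined, because the independent-set vertices in $I$ have no edges among themselves and their membership in $W$, $N_G(X)$, or $X$ is forced by their adjacencies to $C$.

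**The main technical step** is to show that the trace on $C$ nearly determines the whole triple. Consider a vertex $u \in I$. Since $I$ is independent, all neighbors of $u$ lie in $C$. If any neighbor of $u$ lies in $C \cap X$, then $u \in N_G[X]$, so $u \notin W$; conversely I would check whether $u$ could be placed in $W$ or in $N_G(X)$ based solely on its neighborhood within the fixed partition of $C$. The delicate case is a vertex $u \in I$ whose neighbors all lie in $C \cap W$ (equivalently, $u$ is not dominated by $X$): such a vertex may freely belong either to $W$ or be available to serve as a member of $X$. **The hard part will be** controlling this ambiguity and the connectivity requirement on $G[W]$ simultaneously, since $G[W]$ must be connected and $W \neq \emptyset$, which both restricts and complicates the counting.

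**To handle this**, I would separate the count into K-sets that meet $C$ and K-sets contained entirely in $I$. A K-set $W \subseteq I$ must be connected in $G[W]$; but since $I$ is independent, $G[I]$ has no edges, so any connected subset of $I$ is a single vertex. Thus each such K-set is a singleton $\{u\}$ with $u \in I$, and these contribute at most $|I| = |V| - \tau(G)$ candidates. For K-sets meeting $C$, I would show the trace on $C$ together with the forced behavior on $I$ pins down the triple, giving at most $3^{\tau(G)}$ such K-sets. **The final accounting** must then subtract the overcount: the crude bound $3^{\tau(G)} + (|V| - \tau(G))$ double-counts certain configurations, and the term $-2^{\tau(G)}$ in the statement should come from discarding the traces that yield empty or disconnected $W$ — specifically, the $2^{\tau(G)}$ traces in which $C \cap W = \emptyset$ (each vertex of $C$ assigned only to $N_G(X)$ or $X$) fail to produce a valid K-set meeting $C$, so they must be removed from the $3^{\tau(G)}$ count. **The subtlest point** I expect to wrestle with is proving rigorously that distinct valid traces give distinct K-sets (injectivity of the map from K-sets to their data), and that exactly $2^{\tau(G)}$ traces are invalidated, rather than some other count; reconciling the connectivity constraint with this exact subtraction is where the argument will require the most care.
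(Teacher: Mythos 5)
Your proposal follows essentially the same route as the paper: fix a minimum vertex cover $C$, map each K-set triple to its ordered tripartition of $C$, show the tripartition forces the placement of every vertex of the independent set $I = V \setminus C$ (so each trace with $C \cap W \neq \emptyset$ yields at most one K-set), handle the $W \cap C = \emptyset$ case separately as at most $|V| - \tau(G)$ singletons of $I$, and count $3^{\tau(G)} - 2^{\tau(G)}$ remaining traces. One small correction: the case you call ``delicate'' (a vertex $u \in I$ with all neighbors in $C \cap W$) is in fact fully forced into $W$, since $u$ cannot lie in $X$ (no $W$--$X$ edges) nor in $N_G(X)$ (all of $u$'s neighbors are in $C$ and none is in $C \cap X$); the only case needing the connectivity of $G[W]$ is a vertex with no neighbor in $C \cap W$ or $C \cap X$, which must go to $X$.
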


\begin{proof}
  Let $C$ be a vertex cover of $G$ whose size is equal to $\tau(G)$.
  We say that a tuple $(X, Y, Z)$ is an {\em ordered tripartition} of $C$ if
  $X$, $Y$, and $Z$ are (possibly empty) disjoint subsets of $C$ such that
  $X \cup Y \cup Z = C$.
  From a K-set triple $(W, N_G(X), X)$, we can define an ordered tripartition of $C$: $(W \cap C, N_G(X) \cap C, X \cap C)$.
  Conversely, we consider how many K-set triples $(W, N_G(X), X)$ can be obtained from a fixed ordered tripartition $(W_C, N_C, X_C)$ of $C$, such that $W \cap C = W_C$, $N_G(X) \cap C = N_C$, and $X \cap C = X_C$.
  Suppose first that $W_C$ is empty. In this case, every K-set $W$ consists of exactly one vertex, which is in $V \setminus C$, and there are only $|V \setminus C|$ possibilities for such K-sets.

  Now, suppose that $W_C$ is not empty. We prove that a K-set triple $(W, N_G(X), X)$ is uniquely determined (if it exists) in this case. Obviously, we have $W_C \subseteq W$, $N_C \subseteq N_G(X)$, and $ X_C \subseteq X$ for any target K-set triple $(W, N_G(X), X)$.
  Consider a vertex $v \in V \setminus C$ in the independent set.
  Suppose that $v$ has a neighbor in $X_C$. Then, we conclude that $v$ belongs to $N_G(X)$ since
  $X$ is an independent set and there are no edges between $W$ and $X$.
  Suppose next that $v$ has a neighbor in $W_C$ and has no neighbor in $X_C$. 
  Then, $v$ must belong to $W$ since there are no edges between $W$ and $X$ and every vertex in $N_G(X)$ has a neighbor in $X$.
  Suppose otherwise. We conclude that $v$ belongs to $X$ since $G[W]$ must be connected and $v$ has no neighbor in $X$.

  Therefore, for each ordered tripartition $(W_C, N_C, X_C)$ of $C$, there is at most one K-set triple $(W, N_G(X), X)$ such that $W \cap C = W_C$, $N_G(X) \cap C = N_C$, and $X \cap C = X_C$, except for the case $W_C = \emptyset$.
  Clearly, the number of ordered tripartitions $(W_C, N_C, X_C)$ of $C$ with $W_C \neq \emptyset$ is $3^{\tau(G)} - 2^{\tau(G)}$ (as the number of ordered tripartitions with $W_C = \emptyset$ is $2^{\tau(G)}$) and hence the theorem follows.
\end{proof}

It is natural to ask whether the upper bound in Theorem~\ref{thm:vc-ub} can be improved.
However, this is essentially impossible since the lower bound example in Lemma~\ref{lem:general-lb} 
has a minimum vertex cover of size $(n - 1) / 3 + 1$, which implies our upper bound is tight up to a constant multiplicative factor.

\section{Modular-Width and K-sets}\label{sec:modular}
Let $G = (V, E)$ be a graph. A vertex set $M \subseteq V$ is a {\em module} of $G$ if for every $v \in V \setminus M$, either $M \cap N_G(v) = \emptyset$ or $M \subseteq N_G(v)$ holds.
A {\em modular decomposition} of $G$ is a recursive decomposition of $G$ into some modules.
Here, we define a width parameter associated with this decomposition.

\begin{definition}\label{def:md}
  The {\em modular-width} of $G$, denoted by $\mw(G)$, is the minimum integer $k \ge 1$ such that at least one of the following conditions hold:
  \begin{itemize}
    \item[M1] $G$ consists of a single vertex or
    \item[M2] $V$ can be partitioned into at most $k$ modules $V_1, V_2, \ldots, V_{k'}$ such that
    the modular-width of each $G[V_i]$ is at most $k$.
  \end{itemize}
\end{definition}

In several papers, the definition of modular-width is slightly different from ours. More specifically, the following two conditions are included in addition to those in Definition~\ref{def:md}:
\begin{itemize}
  \item[M3] $G$ is a disjoint union of two graphs of modular-width at most $k$,
  \item[M4] $G$ is a join of two graphs of modular-width at most $k$, that is, $G$ is obtained by taking a disjoint union of the two graphs and adding edges between those graphs.
\end{itemize}

Note that the conditions M3 and M4 are special cases of M2, and this difference may change the modular-width of some extreme cases.
However, we emphasize that this does not change our discussion and then use the simpler version for expository purposes.
The modular-width and its decomposition can be computed in linear time \cite{MS94}.

Suppose that $G$ has modular-width at most $k$.
Then, $V$ can be partitioned into at most $k$ modules $V_1, V_2, \ldots, V_{k'}$ with $k' \le k$, such that $\mw(G[V_i]) \le k$ for $1 \le i \le k'$.
Observe that for every pair of distinct modules $V_i$ and $V_j$, 
either $E(V_i, V_j) = \emptyset$ or $E(V_i, V_j) = V_i \times V_j$.
To see this, consider an arbitrary $v \in V_i$.
Since $V_j$ is a module in $G$, either $V_j \cap N_G(v) = \emptyset$ or $V_j \cap N_G(v) = V_j$.
If $V_j \cap N_G(v) = \emptyset$, $V_j \cap N_G(w) = \emptyset$ holds for every $w \in V_i$ since $V_i$ is also a module in $G$.
Otherwise, $V_j \cap N_G(w) = V_j$ holds for every $w \in V_i$.

Let $H$ be a graph with vertex set $\{v_1, v_2, \ldots, v_{k'}\}$ such that
$v_i$ is adjacent to $v_j$ if and only if $E(V_i, V_j) = V_i \times V_j$.
In other words, $H$ is obtained from $G$ by identifying each module $V_i$ into a single vertex $v_i$.
We say that two modules $V_i$ and $V_j$ are adjacent if $v_i$ and $v_j$ are adjacent in $H$.
When $H$ is obtained as above, $G$ is called an {\em expansion graph} of $H$ and, for a subset $U \subseteq \{v_1, v_2, \ldots, v_{k'}\}$, the vertex set $\bigcup_{v_i \in U} V_i$ is an {\em expansion} of $U$.

\begin{lemma}\label{lem:expansion}
  Let $W$ be a K-set of $G$ with K-set triple $(W, N_G(X), X)$.
  Let $\{V_1, V_2, \ldots, V_{k'}\}$ is a set of modules of $G$ that partitions $V$.
  Let $H$ be a graph with vertex set $\{v_1, v_2, \ldots, v_{k'}\}$ whose expansion graph is $G$.
  Then, at least one of the following conditions hold:
  \begin{itemize}
    \item $W$ is an expansion of some K-set of $H$ or
    \item $W$ is a K-set of $G[V_i]$ for some $1 \le i \le k'$.
  \end{itemize}
 \end{lemma}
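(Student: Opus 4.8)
The plan is to analyze how the independent set $X$ witnessing the K-set triple $(W, N_G(X), X)$ interacts with the modules $V_1, \dots, V_{k'}$, and to show that this interaction forces $W$ to be either a union of whole modules or to be confined to a single module. First I would project $X$ onto the quotient graph $H$ by setting $A = \{v_i : X \cap V_i \neq \emptyset\}$. The crucial first observation is that $A$ is an independent set of $H$: if $v_i, v_j \in A$ were adjacent in $H$ with $i \neq j$, then $E(V_i, V_j) = V_i \times V_j$, so any two vertices $x \in X \cap V_i$ and $y \in X \cap V_j$ would be adjacent in $G$, contradicting that $X$ is independent.

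Next I would determine $W \cap V_i$ for every module according to the position of $v_i$ relative to $A$ in $H$. There are three cases. If $v_i \in A$, then since $A$ is independent, $v_i$ has no neighbor in $A$, so no vertex of $V_i$ has a neighbor of $X$ lying outside $V_i$; consequently $N_G[X] \cap V_i = N_{G[V_i]}[X \cap V_i]$ and hence $W \cap V_i = V_i \setminus N_{G[V_i]}[X \cap V_i]$. If $v_i \notin A$ but $v_i$ is adjacent in $H$ to some $v_j \in A$, then every vertex of $V_i$ is adjacent to every vertex of the nonempty set $X \cap V_j$, so $V_i \subseteq N_G(X)$ and $W \cap V_i = \emptyset$. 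Finally, if $v_i \notin A$ and $v_i$ has no neighbor in $A$, i.e.\ $v_i \in B := V(H) \setminus N_H[A]$, then no vertex of $V_i$ lies in or is adjacent to $X$, so $V_i \subseteq W$.

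The heart of the argument is then a connectivity bookkeeping. I would show that in $G[W]$ there are no edges between the expansion $\bigcup_{v_i \in B} V_i$ and the contribution $W \cap V_i$ of any touched module $v_i \in A$, and likewise no edges between the contributions of two distinct touched modules. Both facts follow from the module adjacency being all-or-nothing together with the independence of $A$ and the definition of $B$ (a module in $B$ is non-adjacent to every module of $A$, and two modules of $A$ are non-adjacent). Hence the block $G[\bigcup_{v_i \in B} V_i]$ and the blocks $\{G[W \cap V_i]\}_{v_i \in A}$ are mutually disconnected in $G[W]$. Since $G[W]$ is connected and nonempty, $W$ must lie entirely inside one block. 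If it lies in the $B$-block, then $W = \bigcup_{v_i \in B} V_i$, and contracting this connected subgraph shows $H[B]$ is connected; as $B = V(H) \setminus N_H[A]$ with $A$ independent, $B$ is a K-set of $H$ with triple $(B, N_H(A), A)$, so $W$ is an expansion of a K-set of $H$. Otherwise $W = W \cap V_i = V_i \setminus N_{G[V_i]}[X \cap V_i]$ for a single $v_i \in A$, with $X \cap V_i$ independent in $G[V_i]$ and $G[W]$ connected, i.e.\ $W$ is a K-set of $G[V_i]$. The step I expect to require the most care is this edge-and-connectivity bookkeeping, namely verifying that the three module types can never be joined through $W$, so that connectivity of $G[W]$ genuinely collapses to exactly one of the two stated cases.
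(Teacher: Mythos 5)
Your proof is correct, and it is organized differently from the paper's. The paper first reduces to connected $G$, then splits on whether some module is ``cut'' by $W$ (i.e.\ $V_i\cap W\neq\emptyset$ and $V_i\setminus W\neq\emptyset$): if no module is cut it projects $W$ and $X$ to $H$ and verifies the K-set triple there, and if some $V_i$ is cut it argues by contradiction that no other module can meet $W$, so $W\subseteq V_i$. You instead classify \emph{all} modules by their relation to the projection $A$ of $X$ onto $H$, compute the trace $W\cap V_i$ explicitly in each of the three cases ($v_i\in A$, $v_i\in N_H(A)\setminus A$, $v_i\in B=V(H)\setminus N_H[A]$), and let the connectivity of $G[W]$ force $W$ to coincide with exactly one of the mutually non-adjacent blocks. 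Both arguments rest on the same two facts (inter-module adjacency is all-or-nothing, and $X$ is independent), but your version buys a few things: it avoids the preliminary induction on connected components (which in the paper relies on the slightly delicate claim that each module lies in one component), it identifies concretely which K-set arises in each outcome (the triple $(B,N_H(A),A)$ in $H$, or the triple $(V_i\setminus N_{G[V_i]}[X\cap V_i],\,\cdot\,,X\cap V_i)$ in $G[V_i]$), and it replaces the paper's two separate contradiction arguments with a single disconnection observation. The one step to state carefully is the conclusion ``$W$ equals exactly one block'': since the blocks partition $W$ and are pairwise non-adjacent in $G$, connectedness and non-emptiness of $G[W]$ indeed leave exactly one block nonempty, which is what you claim.
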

 \begin{proof}
  Observe first that if $G$ has two or more connected components, each module is included in some connected component of $G$.
  Moreover, as $W$ is connected, $W$ is also a K-set of a connected component of $G$.
  Therefore, in this case, we inductively apply the lemma to this connected component.
  Thus, we consider the case where $G$ is connected.

  Let $(W, N_G(X), X)$ be a K-set triple of $G$.
  In the following, we prove that $W$ is a K-set of $G[V_i]$ for some $1 \le i \le k'$ under the assumption that $W$ is not an expansion of any K-sets of $H$.
  
  We first observe that there is a module $V_i$ with $V_i \cap W \neq \emptyset$ and $V_i \setminus W \neq \emptyset$.
  To see this, suppose to the contrary that every module $V_i$ satisfies either $V_i \cap W = \emptyset$ or $V_i \subseteq W$.
  Let $W_H$ and $X_H$ be the sets of vertices of $H$ whose corresponding modules have an non-empty intersection with $W$ and $X$, respectively.
  We claim that $(W_H, N_H(X_H), X_H)$ is a K-set
  As $G[W]$ is connected, $H[W_H]$ is connected as well.
  Similarly, $X_H$ is an independent set of $H$.
  By the assumption that either $V_i \cap W = \emptyset$ or $V_i \subseteq W$ for every module $V_i$, we have $W_H \cap X_H = \emptyset$.
  If there is an edge between a vertex in $W_H$ and a vertex in $X_H$, then there are adjacent modules $V_i$ and $V_j$ such that $V_i$ has a vertex in $W$ and $V_j$ has a vertex in $X$, which contradicts to the fact that $W = N_G[X]$.
  Thus, $W_H = N_H[X_H]$ and therefore $W_H$ is a K-set of $H$.
  
  Let $V_i$ be a module of $G$ with $V_i \cap W \neq \emptyset$ and $V_i \setminus W \neq \emptyset$.
  We then claim that $W \subseteq V_i$.
  Suppose for contradiction that there is a module $V_j$ distinct from $V_i$ such that $V_j \cap W \neq \emptyset$.
  As $W$ is connected in $G$, we can choose $V_j$ so that $V_j$ is adjacent to $V_i$.
  If $V_i \cap X \neq \emptyset$, we have $V_j \subseteq N_G(X)$, which contradicts the fact that $V_j$ has a vertex of $W$.
  Moreover, if $V_i \cap N_G(X) \neq \emptyset$, there is an adjacent module $V_k$ with $V_k \cap X \neq \emptyset$ as $V_i \cap X = \emptyset$.
  This implies that $V_i$ is entirely contained in $N_G(X)$, which also leads to a contradiction.
  Therefore, we have $W \subseteq V_i$.

  Clearly, $W$ is connected in $V_i$, and $V_i \cap X$ is also an independent set in $G[V_i]$ with $W = V_i \setminus N_{G[V_i]}(V_i \cap X)$. Therefore, $W$ is a K-set in $G[V_i]$. 
 \end{proof}
 
 For a positive integer $n$, a {\em partition of $n$} is a multiset of positive integers $n_1, n_2, \ldots, n_t$ with $\sum_{1 \le i \le t} n_i = n$.
 \begin{lemma}\label{lem:eps}
    For every $\varepsilon > 0$, there is a constant $n_\varepsilon$ such that for any integer $n \ge n_{\varepsilon}$ and for any partition $n_1, n_2, \ldots, n_t$ of $n$ with $t \ge 2$, it holds that $\sum_{1 \le i \le t} n_i^{1+\varepsilon} + 1 \le n^{1+\varepsilon}$.
 \end{lemma}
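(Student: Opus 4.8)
The plan is to reduce the statement to a one-variable estimate. The crucial observation is that, since $t \ge 2$, every part satisfies $n_i \le n - 1$: the largest part can be at most $n$ minus the contribution of the remaining $t - 1 \ge 1$ parts, each of which is at least $1$. Combining this with $\sum_i n_i = n$, I would bound
$$\sum_{1 \le i \le t} n_i^{1+\varepsilon} = \sum_{1 \le i \le t} n_i \cdot n_i^{\varepsilon} \le (n-1)^{\varepsilon} \sum_{1 \le i \le t} n_i = n(n-1)^{\varepsilon},$$
which holds uniformly over \emph{all} partitions of $n$ with at least two parts. In particular, no convexity or superadditivity argument about which partition is extremal is needed. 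It therefore suffices to produce a constant $n_\varepsilon$ such that $n(n-1)^{\varepsilon} + 1 \le n^{1+\varepsilon}$ for all $n \ge n_\varepsilon$; the partition structure has been eliminated entirely.

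Next I would show that the gap $n^{1+\varepsilon} - n(n-1)^{\varepsilon}$ tends to infinity, which immediately furnishes the required $n_\varepsilon$. Factoring out $n$, the gap equals $n\bigl(n^{\varepsilon} - (n-1)^{\varepsilon}\bigr)$. Writing $n^{\varepsilon} - (n-1)^{\varepsilon} = n^{\varepsilon}\bigl(1 - (1 - 1/n)^{\varepsilon}\bigr)$ and using $1 - (1 - 1/n)^{\varepsilon} \sim \varepsilon/n$ as $n \to \infty$ (the derivative of $t \mapsto t^{\varepsilon}$ at $t = 1$), the gap is asymptotic to $\varepsilon n^{\varepsilon}$, which diverges since $\varepsilon > 0$. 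Alternatively, the mean value theorem gives $n^{\varepsilon} - (n-1)^{\varepsilon} = \varepsilon \xi^{\varepsilon - 1}$ for some $\xi \in (n-1, n)$, whence the gap is at least $\varepsilon n \min\{(n-1)^{\varepsilon-1}, n^{\varepsilon-1}\}$, which again diverges. Once the gap exceeds $1$, I take $n_\varepsilon$ to be the corresponding threshold.

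The argument is elementary and I do not anticipate a serious obstacle. The one point that requires mild care is that the estimate of $n^{\varepsilon} - (n-1)^{\varepsilon}$ be handled uniformly for every $\varepsilon > 0$, covering both $\varepsilon < 1$ and $\varepsilon \ge 1$; this is precisely why the mean value form carries the factor $\min\{(n-1)^{\varepsilon-1}, n^{\varepsilon-1}\}$, the relevant endpoint depending on the sign of $\varepsilon - 1$. In the binding regime $\varepsilon < 1$ one gets gap $\ge \varepsilon n^{\varepsilon}$, so solving $\varepsilon n^{\varepsilon} \ge 1$ yields an explicit admissible threshold $n_\varepsilon$ of order $(1/\varepsilon)^{1/\varepsilon}$, should a closed form be desired.
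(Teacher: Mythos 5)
Your proof is correct, and it takes a genuinely different route from the paper's. The paper first reduces to the extremal partition: it uses convexity of $x \mapsto x^{1+\varepsilon}$ to argue that among all partitions with $t=2$ the sum $\sum_i n_i^{1+\varepsilon}$ is maximized at $(n-1,1)$, handles $t>2$ by induction, and then verifies $n^{1+\varepsilon} \ge (n-1)^{1+\varepsilon} + 2$ via the tangent-line inequality $f(x) \ge f(y) + f'(y)(x-y)$, obtaining the threshold $n_\varepsilon = (2/(1+\varepsilon))^{1/\varepsilon}+1$. You instead eliminate the partition structure in one stroke with the uniform bound $\sum_i n_i^{1+\varepsilon} = \sum_i n_i\, n_i^{\varepsilon} \le (n-1)^{\varepsilon}\sum_i n_i = n(n-1)^{\varepsilon}$, valid for every partition with $t \ge 2$, and then show the one-variable gap $n\bigl(n^{\varepsilon}-(n-1)^{\varepsilon}\bigr)$ diverges; your mean-value-theorem case split on the sign of $\varepsilon - 1$ is handled correctly. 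What your approach buys is brevity: no induction on $t$ and no extremal/majorization argument. What it gives up is sharpness of the constant: your bound $n(n-1)^{\varepsilon} = (n-1)^{1+\varepsilon} + (n-1)^{\varepsilon}$ is looser than the true extremal value $(n-1)^{1+\varepsilon}+1$, and your threshold of order $(1/\varepsilon)^{1/\varepsilon}$ is considerably larger than the paper's $\approx 2^{1/\varepsilon}$ for small $\varepsilon$. Since the lemma only asserts the existence of some constant $n_\varepsilon$ depending on $\varepsilon$, this loss is immaterial, and both thresholds feed into Theorem~\ref{thm:mw-ub} identically.
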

 \begin{proof}
    Assume that $t = 2$ and $n_1 + n_2 = n$ for some integers $n_1, n_2$, where $n_1 \ge n_2 \ge 1$.
    From the convexity of the function $f(x) = x^{1 + \varepsilon}$, we have $1 + (n - 1)^{1 + \varepsilon} \ge n_1^{1+\varepsilon} + n_2^{1+\varepsilon}$.
    Thus, it suffices to prove the lemma for the case $n_1 = n - 1$ and $n_2 = 1$.
    
    Since $f$ is convex and differentiable, by an equivalent characterization of differentiable convex functions~\cite{Boyd:Convex:2004}, we have $f(x) \ge f(y) + f'(y)(y - x)$ for $x, y \in \mathbb R$, which implies
    \begin{align*}
        f(x) = x^{1 + \varepsilon} \ge y^{1 + \varepsilon} + (1 + \varepsilon)y^{\varepsilon}(x - y)
    \end{align*}
    for $x, y \in \mathbb R$.
    Now, we set $x = 1 + 1 / (n - 1)$ and $y = 1$. Then, it follows that $(1 + 1 / (n - 1))^{1 + \varepsilon} \ge 1 + (1 + \varepsilon)/(n - 1)$.
    Thus, we have
    \begin{eqnarray*}
        n^{1 + \varepsilon} &=& (n - 1)^{1 + \varepsilon}(1 + \frac{1}{n - 1})^{1 + \varepsilon}\\
        &\ge& (n - 1)^{1 + \varepsilon}(1 + \frac{1 + \varepsilon}{n - 1})\\
        &=& (n - 1)^{1 + \varepsilon} + (1 + \varepsilon)(n - 1)^{\varepsilon}.
    \end{eqnarray*}
    For every $n \ge (\frac{2}{1 + \varepsilon})^{1 / \varepsilon} + 1$, it holds that $(1 + \varepsilon)(n - 1)^{\varepsilon} \ge 2$.
    Therefore, the lemma holds for $t = 2$. By inductively applying this argument for $t > 2$, the lemma follows.
 \end{proof}

  Now, we are ready to prove the main claim of this section.
 \begin{theorem}\label{thm:mw-ub}
  For every graph $G$ and every $\varepsilon > 0$, $\kappa(G) = O(1.6031^{\mw(G)}|V|^{1 + \varepsilon})$.
 \end{theorem}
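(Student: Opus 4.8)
The plan is to prove the sharper statement that $\kappa(G) \le c \cdot 1.6031^{\mw(G)} |V|^{1+\varepsilon}$ for a suitable constant $c = c(\varepsilon)$, by induction on $|V|$, and then read off the claimed $O(\cdot)$ bound. The engine of the induction is Lemma~\ref{lem:expansion}: it lets me account for every K-set of $G$ in terms of two strictly simpler objects, namely the quotient graph $H$ and the induced subgraphs $G[V_i]$ on the parts of a modular partition. Concretely, I would fix a modular partition $V_1, \dots, V_{k'}$ realizing $\mw(G) = k$; for $|V| \ge 2$ one can take a proper partition with $2 \le k' \le k$ and each $\mw(G[V_i]) \le k$, and let $H$ be the associated quotient graph on $k'$ vertices.

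By Lemma~\ref{lem:expansion}, every K-set $W$ of $G$ is either an expansion of a K-set of $H$ or a K-set of some $G[V_i]$. Since the parts are nonempty and disjoint, distinct subsets of $V(H)$ expand to distinct vertex sets, so K-sets of the first type number at most $\kappa(H)$, while those of the second type number at most $\sum_i \kappa(G[V_i])$. This yields the recursion
\[
\kappa(G) \le \kappa(H) + \sum_{i=1}^{k'} \kappa(G[V_i]).
\]
For the quotient, Lemma~\ref{lem:general-ub} supplies a constant $C$ with $\kappa(H) \le C \cdot 1.6031^{k'} \le C \cdot 1.6031^{k}$, using $k' \le k$. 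For the parts, each $|V_i| < |V|$, so the inductive hypothesis applies, and since $\mw(G[V_i]) \le k$ it gives $\kappa(G[V_i]) \le c \cdot 1.6031^{k} |V_i|^{1+\varepsilon}$. Substituting and factoring out $c \cdot 1.6031^{k}$ reduces the goal to verifying
\[
\frac{C}{c} + \sum_{i=1}^{k'} |V_i|^{1+\varepsilon} \le |V|^{1+\varepsilon}.
\]

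Finally, Lemma~\ref{lem:eps} closes the induction: for $|V| \ge n_\varepsilon$ and $k' \ge 2$ it gives $\sum_i |V_i|^{1+\varepsilon} + 1 \le |V|^{1+\varepsilon}$, so choosing $c \ge C$ makes $\tfrac{C}{c} \le 1$ fit inside the ``$+1$'' slack. The finitely many remaining cases $|V| < n_\varepsilon$ are dispatched by the trivial estimate $\kappa(G) \le 2^{|V|} \le 2^{n_\varepsilon}$ and absorbed by enlarging $c$. I expect the main obstacle to be precisely this bookkeeping of the additive $\kappa(H)$ term: because it scales like $1.6031^{k}$ rather than being negligible against the inductive contributions, the induction can only close thanks to the strict ``$+1$'' margin in Lemma~\ref{lem:eps}, which is exactly why that lemma is stated with superadditivity to spare. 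A secondary point I would check with care is that the chosen modular partition can indeed be taken proper, so that both $k' \ge 2$ (needed to invoke Lemma~\ref{lem:eps}) and $|V_i| < |V|$ (needed to invoke the inductive hypothesis) hold simultaneously while respecting $k' \le \mw(G)$.
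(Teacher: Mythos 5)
Your proposal follows essentially the same route as the paper's proof: induction on $|V|$ using Lemma~\ref{lem:expansion} to split the K-sets between the quotient graph $H$ (bounded via Lemma~\ref{lem:general-ub}) and the parts $G[V_i]$ (bounded by the induction hypothesis), with Lemma~\ref{lem:eps} absorbing the additive $\kappa(H)$ term. Your extra bookkeeping --- fixing a single constant $c$ in advance, dispatching the finitely many cases $|V| < n_\varepsilon$ separately, and checking that the modular partition is proper so that $k' \ge 2$ --- is a welcome tightening of details the paper glosses over, but it is the same argument.
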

 \begin{proof}
  The proof is by induction on the number of vertices. The base case, where $G$ consists of a single vertex, is trivial.
  Let $V_1, V_2, \ldots, V_k$ be $k \le \mw(G)$ modules of $G$ such that $\mw(G[V_i]) \le \mw(G)$ for each $1 \le i \le k$, and let $H$ be a graph of $k$ vertices whose expansion graph is $G$.
  By Lemma~\ref{lem:expansion}, every K-set of $G$ is either an expansion of a K-set of $H$ or a K-set of some subgraph $G[V_i]$.
  Let $\CW_0$ be the set of K-sets, each of which is an expansion of a K-set of $H$ and let $\CW_i$ be the set of K-sets of $G[V_i]$ for $1 \le i \le k$.
  By Lemma~\ref{lem:general-ub}, there is some constant $c > 0$ such that $|\CW_0| = \kappa(H) \le c \cdot 1.6031^{k}$. We apply the induction hypothesis to each $G[V_i]$, and then we have $|\CW_i| = \kappa(G[V_i]) \le c' \cdot 1.6031^{\mw(G)}|V_i|^{1 + \varepsilon}$ for some constant $c' > 0$.
  Summing up of all $|\CW_i|$, we have
  \begin{eqnarray*}
    \kappa(G) &\le& \sum_{0 \le i \le k} |\CW_i| \\
    &\le& \max(c, c') \cdot 1.6031^{\mw(G)}(1 + \sum_{1 \le i \le k}|V_i|^{1 + \varepsilon}) \\
    &=& O(1.6031^{\mw(G)}|V|^{1 + \varepsilon}).
  \end{eqnarray*} 
  The last equality follows from Lemma~\ref{lem:eps}.
 \end{proof}
 
Let us note that this argument also works on trees.
Bodlaender et al. \cite{BKT15} proved that $\kappa(T) \le n\cdot 3^{n/3}$ for any tree $T$ with $n$ vertices.

\begin{lemma}
    Let $T$ be a tree and let $V_1, V_2, \ldots, V_k$ be a set of modules of $T$ that partitions the vertex set of $T$.
    Then, a graph $H$ whose expansion graph is $T$ has no cycles.
\end{lemma}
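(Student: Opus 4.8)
The plan is to argue by contradiction: I will assume that $H$ contains a cycle and then lift it to a cycle in $T$, contradicting the fact that a tree is acyclic. The key structural fact I will exploit is the one already established for expansion graphs, namely that for any two distinct modules $V_i$ and $V_j$ we have either $E(V_i, V_j) = \emptyset$ or $E(V_i, V_j) = V_i \times V_j$, and that $v_i$ and $v_j$ are adjacent in $H$ precisely in the latter case. In words, an edge of $H$ corresponds to a \emph{complete} bipartite connection between the two modules in $T$, which will make the lifting essentially free of bookkeeping.

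First I would suppose that $H$ has a cycle $v_{i_1} v_{i_2} \ldots v_{i_\ell} v_{i_1}$ with $\ell \ge 3$ and with the $v_{i_j}$ pairwise distinct. Since these are distinct vertices of $H$, the corresponding modules $V_{i_1}, \ldots, V_{i_\ell}$ are pairwise distinct and hence pairwise disjoint. I would then pick an arbitrary representative vertex $w_{i_j} \in V_{i_j}$ for each $j$. Because the modules are disjoint, the chosen vertices $w_{i_1}, \ldots, w_{i_\ell}$ are pairwise distinct.

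The heart of the argument is that these representatives already form a cycle in $T$. For each consecutive pair $v_{i_j} v_{i_{j+1}}$ on the cycle (indices read cyclically), adjacency in $H$ gives $E(V_{i_j}, V_{i_{j+1}}) = V_{i_j} \times V_{i_{j+1}}$, so in particular $\{w_{i_j}, w_{i_{j+1}}\} \in E(T)$. Thus $w_{i_1} w_{i_2} \ldots w_{i_\ell} w_{i_1}$ is a closed walk on $\ell \ge 3$ distinct vertices whose consecutive pairs are all edges of $T$, i.e.\ a cycle of length $\ell$ in $T$. This contradicts the assumption that $T$ is a tree, and therefore $H$ has no cycle.

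I do not expect a genuine obstacle here; the lemma is short. The only point that deserves care is justifying why an \emph{arbitrary} choice of representatives suffices. For a general quotient one might worry that traversing a module forces the entry and exit vertices to coincide or to be joined by an internal path, which could fail for a disconnected module and thus break the lifting. The complete-join property dissolves this worry, since every vertex of $V_{i_j}$ is adjacent in $T$ to every vertex of each neighboring module on the cycle, so no internal connectivity of the modules is ever needed.
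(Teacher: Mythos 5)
Your proof is correct and takes essentially the same approach as the paper: assume $H$ has a cycle and lift it to a cycle of $T$ using the fact that adjacent modules are completely joined. Your uniform lifting via arbitrary representatives is in fact slightly cleaner than the paper's version, which splits into the case where every module on the cycle is a singleton (so the cycle lifts directly) and the case where some module $V_i$ on the cycle has at least two vertices (where it exhibits a $4$-cycle through two vertices of $V_i$ and one vertex from each of its two neighbors on the cycle).
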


\begin{proof}
    Suppose that $H$ has a cycle $C$. Obviously there are at least three modules.
    If every module that is involved by this cycle has exactly one vertex, then $C$ is also a cycle of $T$, a contradiction.
    Therefore, we assume that there is a vertex $v_i$ on the cycle whose module $V_i$ contains at least two vertices.
    Let $v_b$ and $v_f$ be the vertices adjacent to $v_i$ on $C$ and let $V_b$ and $V_f$ be the corresponding modules.
    Then, $x \in V_b$, $y \in V_f$, and any pair of vertices in $V_i$ form a cycle, which contradicting to the fact that $T$ is a tree.
\end{proof}

Therefore, we can apply the improved upper bound for trees to the argument in Lemma~\ref{lem:expansion}.
Hence, we have $\kappa(T) = O(3^{\mw(T)/3}n^{1 + \varepsilon})$.

\begin{corollary}
    For every tree $T$ and every $\varepsilon > 0$, $\kappa(T) = O(3^{\mw(T)/3}n^{1 + \varepsilon})$,
    where $n$ is the number of vertices in $T$.
\end{corollary}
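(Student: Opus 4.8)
The plan is to rerun the inductive argument of Theorem~\ref{thm:mw-ub} almost verbatim, changing only the bound applied to the quotient graph. Recall that in that proof the two ingredients are Lemma~\ref{lem:expansion}, which splits the K-sets of $G$ into those that are expansions of K-sets of the quotient $H$ (the set $\CW_0$) and those living inside a single module $G[V_i]$ (the sets $\CW_i$), together with the general estimate $\kappa(H) = O(1.6031^{k})$ from Lemma~\ref{lem:general-ub}. For trees I would substitute the sharper input coming from $\kappa(T) \le n\cdot 3^{n/3}$. The unnamed lemma just proved guarantees that the quotient $H$ of a tree is acyclic, hence a forest; since every K-set is connected it lies inside a single tree-component, so summing the tree bound over the at most $k$ components of a forest on $k$ vertices gives $\kappa(H) \le k\cdot 3^{k/3}$. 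I would state the claim for forests rather than trees, because induced subgraphs of trees are forests and the statement must be stable under the recursion $F \mapsto F[V_i]$; for a forest $F$ with modular partition $V_1,\dots,V_k$ (so $k \le \mw(F)$) and quotient $H$, Lemma~\ref{lem:expansion} yields $\kappa(F) \le \kappa(H) + \sum_i \kappa(F[V_i]) \le k\cdot 3^{k/3} + \sum_i \kappa(F[V_i])$, and after bounding $3^{k/3}\le 3^{\mw(F)/3}$ and invoking the induction hypothesis on each $F[V_i]$, the whole thing factors as $3^{\mw(F)/3}$ times $\bigl(k + \mathrm{const}\cdot\sum_i |V_i|^{1+\varepsilon}\bigr)$.

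The one genuine departure from Theorem~\ref{thm:mw-ub}, and the step I expect to be the real obstacle, is precisely the extra linear factor $k$ in $\kappa(H)\le k\cdot 3^{k/3}$. In the general case the quotient contributed only a constant multiple of $1.6031^{\mw(G)}$, which is exactly the ``$+1$'' worth of slack that Lemma~\ref{lem:eps} is designed to absorb; here instead one must absorb a term of order $k \le \mw(F)\le n$, which the $+1$ in Lemma~\ref{lem:eps} does not cover directly (its superadditivity slack is only of order $n^{\varepsilon}$).

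I see two equivalent ways to close this gap. The first stays closest to the paper's template: strengthen Lemma~\ref{lem:eps} to $t + \sum_{i} n_i^{1+\varepsilon} \le n^{1+\varepsilon}$ for $n$ large and $2 \le t \le n$. This still holds: for fixed $t$, convexity of $x^{1+\varepsilon}$ maximizes the left side at $n_1 = n-t+1$, $n_2=\dots=n_t=1$, and writing $m = n-t+1$ the resulting quantity $n^{1+\varepsilon} - m^{1+\varepsilon} - 2n + 2m - 1$ is concave in $m$, so its minimum over the feasible range occurs at an endpoint, where the inequality reduces to $n^{\varepsilon}\ge 2$ (case $m=1$) or to $(1+\varepsilon)(n-1)^{\varepsilon}\ge 3$ (case $m=n-1$, using the same convexity estimate as in Lemma~\ref{lem:eps}), both valid for large $n$. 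Feeding this strengthened estimate into the induction absorbs the $k$ term and yields $\kappa(F)=O(3^{\mw(F)/3}n^{1+\varepsilon})$.

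The second and cleaner route dispenses with $\varepsilon$ entirely: unrolling the recursion over the modular decomposition tree gives $\kappa(T) \le \sum_{u} \kappa(H_u) + n$, where $u$ ranges over the internal nodes and the trailing $n$ counts the single-vertex leaves. Since $\kappa(H_u)\le k_u\cdot 3^{k_u/3}\le 3^{\mw(T)/3}\,k_u$ and $\sum_u k_u$ equals the number of edges of the decomposition tree, which is $O(n)$, this already gives $\kappa(T)=O(3^{\mw(T)/3}n)$, and a fortiori the claimed $O(3^{\mw(T)/3}n^{1+\varepsilon})$. In either route the only bookkeeping is the treatment of prime (indecomposable) subgraphs, where the quotient coincides with $F$ itself and the recursion makes no progress; there one applies the tree bound $\kappa(F)\le |V(F)|\cdot 3^{|V(F)|/3}$ directly and uses $\mw(F)=|V(F)|$, exactly as Theorem~\ref{thm:mw-ub} implicitly falls back on Lemma~\ref{lem:general-ub} in the prime case.
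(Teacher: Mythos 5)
Your proposal is correct and follows the same overall template as the paper: the paper's entire argument for this corollary is the single sentence that one can ``apply the improved upper bound for trees to the argument in Lemma~\ref{lem:expansion},'' i.e., rerun the induction of Theorem~\ref{thm:mw-ub} with the quotient bound $O(1.6031^{k})$ from Lemma~\ref{lem:general-ub} replaced by the tree bound $k\cdot 3^{k/3}$. Where you genuinely go beyond the paper is in noticing that this substitution is not free: the quotient now contributes an additive term of order $k\le\mw(T)$ at each level of the recursion rather than a constant, and the additive $1$ of slack provided by Lemma~\ref{lem:eps} does not absorb it. Your strengthened inequality $t+\sum_i n_i^{1+\varepsilon}\le n^{1+\varepsilon}$ (valid for large $n$ by the same convexity argument) repairs the induction, and your observation that the induction must be phrased for forests --- so that the acyclicity lemma and the $n\cdot 3^{n/3}$ bound have to be extended from trees to forests, both of which are immediate since every K-set is connected and hence lives in one component --- is also a detail the paper leaves implicit. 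Your second route, unrolling the recursion over the modular decomposition tree and bounding $\sum_u k_u = O(n)$, is cleaner than anything in the paper and in fact proves the stronger bound $\kappa(T)=O(3^{\mw(T)/3}n)$, which renders the $\varepsilon$ in the statement unnecessary. In short: same decomposition and same key lemmas as the paper, but your write-up identifies and closes a small but real gap that the paper's one-line derivation glosses over, and your alternative accounting sharpens the polynomial factor.
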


\section{Polynomial Kernel with Neighborhood Diversity}
In the previous section, we use the modular-width of a graph, which measure how well the graph can recursively decomposed into modules.
In this section, we use another graph parameter related to modular-width.

Let $M$ be a module of $G$.
We call M a {\em clique module} if $M$ is a clique of $G$ and an {\em independent module} if $M$ is an independent set of $G$.
The {\em neighborhood diversity} of a graph $G$, denoted by $\nd(G)$, is the minimum integer $k$ such that $G$ can be partitioned into $k$ modules, each of which is either a clique module or an independent module.
Since a single vertex is a clique module (or an independent module), $\nd(G) \le |V|$.
The neighborhood diversity of a graph can be computed in linear time \cite{Lam12}.

In this section, we prove that for every graph $G$, there is a graph $H$ with $\nim(G) = \nim(H)$ such that $H$ has at most $2\cdot\nd(G)$ vertices.
Moreover, such a graph $H$ can be computed in linear time, that is, {\sc Node Kayles} admits a polynomial kernelization with respect to neighborhood diversity. 

Let $M$ be a clique module in $G$.
A crucial observation is that for any pair of vertices $u, v \in M$, $N_G[u] = N_G[v]$.
Moreover, since $M$ is a module, for every $v \in V \setminus M$, either $M \subseteq N_G[v]$ or $M \cap N_G[v] = \emptyset$ holds.
This means that every vertex in $M$ is indistinguishable.
The idea is formalized as follows.

\begin{lemma}\label{lem:clique-module}
    Let $M$ be a clique module of size at least three in $G$ and let $H = (V_H, E_H)$ be the graph obtained from $G$ by identifying $M$ into a single vertex $v_M$.
    Then, $\nim(G) = \nim(H)$.
\end{lemma}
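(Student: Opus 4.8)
The plan is to prove $\nim(G) = \nim(H)$ by induction on $|V|$, working directly with the recursive definition of the nimber, that is, $\nim(G) = \mex(\{\nim(G[V \setminus N_G[v]]) : v \in V\})$ and the analogous identity for $H$. The base case is the empty graph, where both sides equal $0$. For the inductive step I would exploit two structural properties of the clique module $M$. First, since $M$ is a clique, $M \subseteq N_G[u]$ for every $u \in M$, so any move that picks a vertex of $M$ deletes all of $M$ at once. Second, since $M$ is a module, every external vertex $w \in V \setminus M$ satisfies either $M \subseteq N_G(w)$ or $M \cap N_G(w) = \emptyset$; hence an external move either deletes all of $M$ or leaves $M$ entirely untouched. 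The identification sends $M$ to the single vertex $v_M$ with $N_H(v_M) = N_G(M)$, so the same dichotomy holds for $v_M$ in $H$, and the external vertices of $G$ and $H$ are in obvious bijection.

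Next I would set up a correspondence between the available moves and compare the resulting positions, splitting the moves of $G$ into three classes: (i) picking some $u \in M$; (ii) picking an external $w$ adjacent to $M$; and (iii) picking an external $w$ not adjacent to $M$. For class (i), every choice $u \in M$ yields the same child $G[V \setminus (M \cup N_G(M))]$, which is literally the same graph as the child $H[V_H \setminus N_H[v_M]]$ obtained by picking $v_M$ in $H$; thus these children have equal nimbers. For class (ii), picking $w$ in $G$ and the same $w$ in $H$ both delete $M$ (respectively $v_M$) together with the identical external part $N_G[w] \setminus M = N_H[w] \setminus \{v_M\}$, so the two residual graphs coincide and again have equal nimbers. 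For class (iii), $M$ (respectively $v_M$) survives intact and the $H$-child is obtained from the $G$-child by identifying $M$; since $M$ remains a clique module in the smaller graph $G[V \setminus N_G[w]]$, the induction hypothesis yields that these two children have equal nimbers.

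Finally I would conclude that the set of nimbers of the one-move children of $G$ equals the set of nimbers of the one-move children of $H$: classes (ii) and (iii) are matched bijectively by $w \mapsto w$, while the collapse of class (i) to the single move $v_M$ produces the same child value. Because $\mex$ depends only on the set of child nimbers and not on their multiplicities, the duplicate children arising from the $|M|$ interchangeable vertices of $M$ are harmless, and therefore $\nim(G) = \nim(H)$. I expect the main obstacle to be the bookkeeping in class (iii): one must verify carefully that $M$ is still a clique module of $G[V \setminus N_G[w]]$ (it is a clique as an induced subgraph of a clique, and a module because none of its vertices is deleted) so that the induction hypothesis genuinely applies, and that identifying $M$ to $v_M$ commutes with deleting $N_G[w]$ on the common external part. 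I note that the size hypothesis $|M| \ge 3$ does not appear to be essential for nimber preservation itself, as the argument goes through for any $|M| \ge 2$; it simply reflects the regime used in the kernelization, where the identification strictly reduces the number of vertices.
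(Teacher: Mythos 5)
Your proof is correct and takes essentially the same approach as the paper: induction on the number of vertices with the same three-way split of moves (a vertex of $M$, an external vertex adjacent to $M$, an external vertex not adjacent to $M$), using that all moves inside $M$ collapse to the single move at $v_M$ and that $\mex$ ignores multiplicities. Your side remark that $|M| \ge 2$ already suffices for nimber preservation is also accurate.
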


\begin{proof}
    We prove the lemma by induction on the number of vertices of $V \setminus M$.
    If $V = M$, the lemma is trivial.
    
    Suppose that $G$ contains at least one vertex other than $M$.
    For every $v \in M$, we have $G[V \setminus N_G[v]] = H[V_H \setminus N_H[v_M]]$.
    Let $v \in V \setminus M$.
    If $M \subseteq N_G(v)$, then we also have $G[V - N_G[v]] = H[V_H \setminus N_H[v_M]]$.
    Otherwise, let $G_v = G[V \setminus N_G[v]]$. 
    Since $M$ is also an clique module in $G_v$, by the induction hypothesis, we have $\nim(G_v) = \nim(H_v)$, where $H_v$ is the graph obtained from $G_v$ by identifying the vertices $M$ into a single vertex.
    Therefore, $\nim(G) = \mex(\{\nim(H[V_H \setminus N_H[v_M]])\} \cup \{\nim(H_v) : v \in V \setminus M\}\})$.
    Since $H_v = H[V_H \setminus N_H[v]]$, we have 
    \[
    \nim(G) = \mex(\{\nim(H[V_H \setminus N_H[v]]) : v \in V_H\}) = \nim(H).
    \]
    This completes the proof.
\end{proof}

Now we can assume that $G$ has no clique modules of size at least two.
Let $M$ be an independent module of size at least two.
Let us observe that independent modules are slightly different from clique modules: If a vertex $v$ in $M$ is chosen,
then $G[V \setminus N_G[v]]$ still contains the other vertex of $M$.
The key to handle these remaining vertices is that $M \setminus \{v\}$ are isolated vertices in $G[V \setminus N_G[v]]$.
If $G$ contains two isolated vertices $u$ and $v$, by Theorem~\ref{thm:nimsum}, we have 
\[
\nim(G) = \nim(G[V \setminus \{u, v\}]) \oplus \nim(G[\{u\}]) \oplus \min(G[\{v\}]) = \nim(G[V \setminus \{u, v\}]).
\]
This implies, that $\nim(G[V \setminus N_G[v]]) = \nim(G[V \setminus (N_G[v] \cup M)]) \oplus p(|M| - 1)$, where $p(k) = 1$ if $k$ is odd and $p(k) = 0$ if $k$ is even.
Since the right-hand side of this equality does not depend on a specific vertex in $M$, does depend on the parity of $|M|$.

\begin{lemma}\label{lem:independent-module}
    Let $M$ be an independent module of size at least three in $G$ and let $H = (V_H, E_H)$ be the graph obtained from $G$ by arbitrary removing $|M| - 2$ vertices of $M$.
    Then, $\nim(G) = \nim(H)$.
\end{lemma}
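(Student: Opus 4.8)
The plan is to mirror the inductive structure of Lemma~\ref{lem:clique-module}, but to track the extra isolated vertices that a chosen vertex of $M$ leaves behind. I would induct on $|V \setminus M|$. The base case $V = M$ is immediate: $G$ is an independent set on $|M| \ge 3$ vertices and $H$ is an independent set on $2$ vertices, so both have nimber $0$ by repeated application of Theorem~\ref{thm:nimsum} to the isolated vertices (an even number of copies of $\nim = 1$ cancels, an odd number leaves $1$; here both $|M|$ and $2$ are being compared only through whether the whole position reduces to $0$, so one checks the parities match --- in fact for an all-independent graph with at least two vertices the nimber is determined purely by parity, and $|M|$ and $2$ need not share parity, so I must be careful and instead argue the base case directly from the definition of the reduced move set below rather than from the global parity).

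The heart of the argument is the inductive step, and this is where I expect the main obstacle to lie. For a move at a vertex $v \in V \setminus M$ with $M \subseteq N_G(v)$, the entire module is deleted in both $G$ and $H$, so $G[V \setminus N_G[v]] = H[V_H \setminus N_H[v]]$ verbatim. For a move at $v \in V \setminus M$ with $M \cap N_G(v) = \emptyset$, the module $M$ survives intact in $G_v := G[V \setminus N_G[v]]$ and survives as an independent module of size $|M| - 0$ there, while in $H$ it survives at size $2$; since $v \notin M$ removes none of $M$, and $M$ still has size $\ge 3$ in $G_v$, the induction hypothesis applies and gives $\nim(G_v) = \nim(H_v)$ where $H_v = H[V_H \setminus N_H[v]]$. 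The genuinely new case is a move at a vertex $v \in M$ itself. Here $G[V \setminus N_G[v]]$ retains the remaining $|M| - 1$ vertices of $M$, all of which become \emph{isolated} (their only neighbors outside $M$ lay in $N_G[v]$, and $M$ is independent). By the remark preceding the lemma, $\nim(G[V \setminus N_G[v]]) = \nim(G[V \setminus (N_G[v] \cup M)]) \oplus p(|M| - 1)$, and crucially this value is the same for every choice of $v \in M$, depending only on the parity of $|M| - 1$.

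The key step is then to verify that the \emph{multiset of achievable nimber values} from moves is the same in $G$ and in $H$, so that the two $\mex$ computations agree. In $H$ the module has size $2$, so moves into $M$ contribute $\nim(H[V_H \setminus N_H[v]]) = \nim(H[V_H \setminus (N_H[v] \cup \{v_1, v_2\})]) \oplus p(1)$ for either of the two vertices $v_1, v_2$ of $M$, i.e.\ one isolated vertex survives. I would need the parities $p(|M| - 1)$ and $p(2 - 1) = p(1) = 1$ to produce the same contribution; since $\mex$ depends only on the \emph{set} of values present, it suffices that the value contributed by an $M$-move in $G$ equals that contributed by an $M$-move in $H$, which reduces to checking $p(|M| - 1) = p(1)$, i.e.\ that $|M|$ is even. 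This reveals the subtlety and the likely main obstacle: the claim as worded, reducing to exactly $2$ vertices, preserves the nimber only because moving into $M$ leaves $|M| - 1$ isolated vertices whose parity must be matched. I would argue that removing $|M| - 2$ vertices changes $|M| - 1$ by $|M| - 2$, and $p$ is invariant under subtracting an even amount; so the reduction is safe exactly when $|M| - 2$ is even, i.e.\ when $|M|$ and $2$ have the same parity. Thus the honest version of the step is: $G_v$ and its $H$-counterpart differ by an even number of isolated vertices, hence have equal nimbers by Theorem~\ref{thm:nimsum}, and this even-difference property is what propagates through the induction. Assembling these cases, the set of reachable nimbers coincides for $G$ and $H$, and therefore $\nim(G) = \mex$ of that common set $= \nim(H)$, completing the proof.
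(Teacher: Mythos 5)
Your parity analysis is correct, and it uncovers a genuine error: the lemma as stated is \emph{false} when $|M|$ is odd, and the paper's own proof silently assumes $|M|$ is even at exactly the two places your analysis points to. Concretely, take $G = K_{1,3}$ with $M$ its three leaves (an independent module of size $3$); then $H = K_{1,2}$, and a direct computation gives $\nim(G) = \mex(\{0\}) = 1$ while $\nim(H) = \mex(\{0,1\}) = 2$. Even the base case fails: three isolated vertices have nimber $1$, two have nimber $0$. In the paper's proof, the base case asserts $|V| \equiv |V_H| \bmod{2}$, which is precisely the unwarranted assumption that $|M|$ is even, and the final chain of equalities for a move at $v \in M$ requires $p(|M|-1) = p(1)$, which is the same assumption: in $H$ only one isolated vertex of $M$ survives such a move, whereas in $G$ there are $|M|-1$ of them. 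The error propagates to the kernelization as described: for the disjoint union of $K_{1,3}$ and a single vertex the nimber is $1 \oplus 1 = 0$ (a second-player win), while its claimed kernel, the disjoint union of $K_{1,2}$ and a single vertex, has nimber $2 \oplus 1 = 3$ (a first-player win), so even the winner is not preserved.

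Your ``honest version'' is the right repair, and your argument for it is sound and follows the same inductive skeleton as the paper: if one removes an \emph{even} number of vertices from $M$ (leaving at least one), then a move into $M$ leaves isolated vertex sets of equal parity in both graphs, moves at $v \notin M$ with $M \subseteq N_G(v)$ give identical positions, moves at $v \notin M$ with $M \cap N_G(v) = \emptyset$ are handled by induction on $|V \setminus M|$, and the base case $V = M$ holds because now $|V| \equiv |V_H| \bmod{2}$ is actually true. One weak spot in your write-up: your parenthetical hope of rescuing the base case ``directly from the reduced move set'' cannot succeed for odd $|M|$, since the claim is genuinely false there; under even removal no rescue is needed. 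Note finally that the corrected lemma still yields the paper's kernel bound: applying it repeatedly reduces each independent module to $2$ vertices when $|M|$ is even and to $1$ vertex when $|M|$ is odd, so the kernel retains at most $2 \cdot \nd(G)$ vertices.
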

\begin{proof}
    We prove the lemma by induction on the number of vertices of $V \setminus M$.
    If $V = M$, by Theorem~\ref{thm:nimsum}, $\nim(G) = p(|V|) = \nim(H)$ since $|V| \equiv |V_H| \bmod{2}$.
    
    Suppose that $V \setminus M$ contains at least one vertex.
    For $v \in V \setminus M$, as $M$ is a module, either $M \subseteq N_G(v)$ or $M \cap N_G(v) = \emptyset$ holds.
    Suppose $v \in V \setminus M$. This case is almost the same as in Lemma~\ref{lem:clique-module}.
    If $M \subseteq N_G[v]$, then we have $G[V \setminus N_G[v]] = H[V_H \setminus N_H[v]]$.
    Otherwise, let $G_v = G[V \setminus N_G[v]]$. 
    By the induction hypothesis, we have $\nim(G_v) = \nim(H_v)$, where $H_v$ is obtained from $G_v$ by removing $|M| - 2$ vertices of $M$.
    Suppose  $v \in M$. Then,
    \begin{eqnarray*}
    \nim(G[V \setminus N_G[v]]) &=& \nim(G[V \setminus (N_G[v] \cup M)]) \oplus p(|M| - 1)\\
    &=& \nim(H[V_H \setminus N_H[v] \cup M]) \oplus p(|M| - 1)\\
    &=& \nim(H[V_H \setminus N_H[v]]).
    \end{eqnarray*}
    The third equality follows from the fact that $M$ is an independent set in $H[V_H \setminus N_H[v]]$.
    Therefore,
    \begin{eqnarray*}
        \nim(G) &=& \mex(\{\nim(G[V \setminus N_G[v]]) : v \in V\})\\
        &=& \mex(\{\nim(H[V_H \setminus N_H[v]]) : v \in V \})\\
        &=& \nim(H),
    \end{eqnarray*}
    which completes the proof.
\end{proof}

Our kernelization algorithm is straightforward. 
For each clique module $M$, remove all but one vertex in $M$, and for each independent module $M$, remove $|M| - 2$ vertices in $M$ from $G$.
Let $H$ be the resulting graph.
By Lemmas~\ref{lem:clique-module} and \ref{lem:independent-module}, $\nim(G) = \nim(H)$.
Moreover, $H$ contains at most $2 \cdot \nd(G)$ vertices.

\section{Concluding Remarks}\label{sec:concl}
In this paper, we give a new running time analysis of the known algorithm for {\sc Node Kayles}.
Bodlaender et al. \cite{BKT15} showed that {\sc Node Kayles} can be solved in $\kappa(G)n^{O(1)}$ time, 
where $\kappa(G)$ is the number of K-sets in $G$ and that $\kappa(G) = O(1.6031^n)$.
We analyze the number of K-sets from the perspective of structural parameterizations of graphs, and
show that $\kappa(G) \le 3^{\tau(G)} + n - \tau(G) - 2^{\tau(G)}$ and $\kappa(G) = O(1.6031^{\mw(G)}n^{1 + \varepsilon})$ for every constant $\varepsilon > 0$.
The first upper bound is tight up to the constant factor, and the second one improves
the known upper bound due to Bodlaender et al.~\cite{BKT15} when the modular-width is relatively small compared to the number of vertices.
We also give a polynomial kernelization for {\sc Node Kayles} with respect to the neighborhood diversity of a graph.

It would be interesting to know whether other graph parameters yield a new upper bound on the number of K-sets.
However, the lower bound example in Lemma~\ref{lem:general-lb} indicates some limitation on this question.
In particular, the treewidth, pathwidth, and even treedepth of this example are all bounded.
Moreover, we can transform the lower bound example into a bounded degree tree as in Fig.~\ref{fig:lb2}, which has also exponentially many K-sets.
Note that this argument does not imply any particular complexity result of {\sc Node Kayles} on trees.
\begin{figure}
  \centering
  \includegraphics[width=3.7cm]{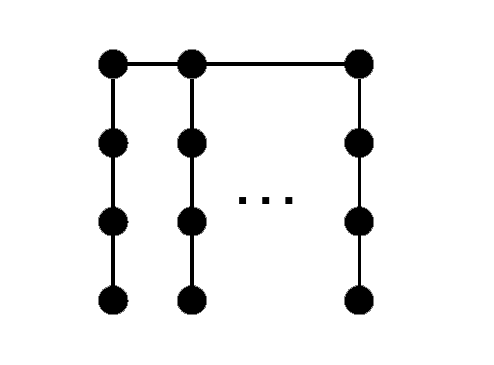}
  \caption{Similarly to the example in Lemma~\ref{lem:general-lb}, this graph has exponentially many K-sets.}\label{fig:lb2}
\end{figure}

\section*{Acknowledgments}
 The author thanks Kensuke Kojima for simplifying the proof of Lemma~\ref{lem:eps} and anonymous reviewers for valuable comments.
 This work was partially supported by JST CREST JPMJCR1401. 
%
%
%
\bibliographystyle{plain}
\bibliography{references}

\end{document}